\documentclass[12pt,a4paper]{iopart}
\usepackage{iopams}
\usepackage{amssymb}
\expandafter\let\csname equation*\endcsname\relax%
\expandafter\let\csname endequation*\endcsname\relax
\usepackage{amsmath}
\usepackage{amsthm}
\usepackage{amsfonts} 
\usepackage{mathtools}
\usepackage{calrsfs}
\usepackage{cite} 
\usepackage{color}
\usepackage{graphicx}
\usepackage{subfigure}
\usepackage{cases}


\theoremstyle{plain}
\newtheorem{thm}{Theorem}[section]

\newtheorem*{cor}{Corollary}


\linespread{1.4}

\providecommand{\keywords}[1]{\textbf{\textit{Keywords---}} #1}

\begin{document}

\title[]{Global Minimality in Constrained Inverse Source Problems for Metamaterials}

\author{Mohamed R. Khodja} \address{Department of Electrical Engineering \\ Prince Mohammad Bin Fahd University, Khobar 31952, Saudi Arabia} \ead{mkhodja@pmu.edu.sa}


%

\begin{abstract}
Global minimality, boundedness, and uniqueness are established for a general, physically motivated class of inverse source problems in non-homogeneous electromagnetic media with generalized constitutive parameters. The existence of a solution was addressed earlier. The radiating source, represented by the current density, was reconstructed earlier by minimizing its $L^{2}$-norm constrained to produce a prescribed radiated field while ensuring vanishing reactive power. Using the $L^{2}$-norm allows for an analytically tractable measure of the physical resources of the source, while the reactive power constraint maximizes transmitted power. Numerical study suggests that sources within active metamaterial substrates can have remarkable tuning behaviors. Tuning stability can be achieved along specific permittivity and permeability curves on zero-reactive power plots. Each permittivity or permeability value can correspond to a discrete set of dual parameter values that enable effective tuning. The tuning characteristics observed suggest that double-positive (DPS) and double-negative (DNG) substrates are more favorable for tuning than single-negative (SNG) materials, possibly due to interactions dominant in DPS and DNG media. These results fill an analytical gap in the solution of a problem that is both intriguing and challenging due to its general formulation, which requires minimizing an objective functional with nonconvex functional constraints on an unbounded domain. They also offer numerical insights that may have implications for the design and optimization of sources in complex media, which is a topic of significant current interest.
\end{abstract}

\keywords{inverse problems, metamaterials, reactive power optimization, global minimality.}

\pagebreak
\section{Introduction}
\label{Sec-intro}
How does extending the domain of physical parameters that influence wave propagation in a medium affect wave behavior? Which extensions remain consistent with physical laws, and how can these generalized parameters be practically realized? These questions, and related others, have been intermittently explored during the 20th century \cite{Mandelstam1945,SchelkunoffFriis1952,veselago}, but saw a resurgence of interest around the early 21st century with the discovery of ``metamaterials'' \cite{PendryHoldenRobbinsEtAl1999,SmithPadillaVierEtAl2000,DingLiuQiuEtAl2007}. Metamaterials are artificial, composite structures that exhibit effective wave-propagation properties unattainable in naturally occurring materials. The study of metamaterials has evolved into a vibrant interdisciplinary field encompassing areas such as nanophysics, electromagnetics, electromechanics, and applied mathematics. Much is to be done to explore what is theoretically possible and practically achievable. 

Several types of metamaterials have been identified including electromagnetic \cite{ZiolkowskiEngheta2006}, mechanical \cite{GarciaHennion2019}, and quantum metamaterials \cite{ZagoskinFelbacqRousseau2016}, each associated with a range of exotic phenomena. For instance, electromagnetic and mechanical metamaterials open up possibilities such as cloaking \cite{RomainFleuryMonticoneAlu2015,BueckmannThielKadicEtAl2014}, sub-wavelength imaging \cite{HaxhaAbdelMalekOuerghiEtAl2018,ZhuChristensenJungEtAl2011}, enhanced or suppressed radiation and scattering \cite{JinHe2010,PengChenWuEtAl2016}. Metamaterials are poised to revolutionize fields as diverse as telecommunication, remote sensing, medical imaging, photonics, photovoltaics, geophysical exploration, and nanofabrication.

Our focus here in on electromagnetic metamaterials, distinguished by their  generalized electric permittivity $\epsilon$ and magnetic permeability $\mu$. These constitutive parameters are not only complex but can also have real parts that are simultaneously negative, imbuing the equations with both mathematical and physical richness \cite{ZiolkowskiEngheta2006}. This results in wave behavior absent in conventional media including left-handed propagation, negative refraction \cite{AydinaGuven2005}, reversed Cherenkov effect \cite{DuanTangWangEtAl2017}, and the inverse Doppler effect \cite{SeddonBearpark2003}. All of these phenomena have been experimentally observed.

Two complementary methodologies guide the study of metamaterials: (1) a direct approach, where predefined, typically achievable metamaterial properties are assumed and their effects on wave propagation is investigated, and (2) an inverse-theoretic approach, which begins with a desired wave-metamaterial interaction. In the inverse-theoretic approach, constrained optimization is employed to derive the metamaterial's properties—geometry, dispersion relations, or constitutive parameters—that can achieve the desired interaction if they are physically plausible. The inverse-theoretic approach holds particular appeal in theoretical studies because its solutions encompass the broader set of \textit{physically conceivable} designs, rather than the narrower subset of \textit{practically achievable} ones. This approach continues to drive research in electromagnetic and elastic inverse problems across mathematics \cite{Fedele_et_al-2024,ArensJiLiu2020,JiLiu2019,LiChenLi2017,DengLiuUhlmann2017,CaroZhou2014,Mederski2015,LassasZhou2016,Uhlmann2014}, physics \cite{Chen_et_al-2020,MoleskyLinPiggottEtAl2018,MulkeyDilliesDurach2017,LiuGabrielliLipsonEtAl2013,RezaDehbashiaBialkowskiAbbosh2017}, and engineering \cite{Ha_et_al-2023,DenisovaRezvov2012,OkhmatovskiAronssonShafai2012,OtomoriAndkjaerSigmundEtAl2012} communities.

In \cite{MKB_SIAM_2008} the authors explored the analytical solution of a general, inverse electromagnetic-radiation problem: reconstructing a source (i.e., current density) embedded in non-homogeneous media with generalized constitutive parameters, radiating a prescribed exterior field. This reconstruction involved minimizing  the “source energy” of the current density, subject to a vanishing reactive power condition. “Source energy” refers to the $L^{2}$-norm of the current density, a metric used not to measure actual energy but for analytical tractability and as a proxy for minimizing the source’s physical resources, since the $L^{2}$-norm essentially reflects the physical energy. Reactive power in this context refers to the cycling power between the inductive and capacitive elements of the source, and minimizing it maximizes the power transmitted to the far-field. The problem is both intriguing and challenging due to its general formulation, which requires minimizing an objective functional with nonconvex functional constraints on an unbounded domain. The authors have already answered the initial question concerning the existence of a tuned (i.e., zero-reactive-power) minimal source. However, they have not addressed its boundedness, minimality, or uniqueness. 

This paper revisits this problem to address these gaps, establishing the lower boundedness, global minimality, and uniqueness of the tuned minimal source. It also presents a numerical study to give an intuitive understanding of the properties of this source in a few typical cases.    
\section{Preliminaries}
\label{Sec_Prelim}
For the sake of completeness, we give in this section a summary of the essential definitions, explicit forms, and proof of existence of the tuned minimal sources. For details, the reader is referred to \cite{MKB_SIAM_2008}. The new results of this paper, namely, the lower boundedness, global minimality, and uniqueness of the tuned minimal sources are presented in section~\ref{Sec_Results}.
\subsection{Radiating Source}
The electromagnetic source to be reconstructed is assumed to be embedded in a substrate with volume $V\coloneqq\{\mathbf{r}\in\mathbb{R}^3:r\coloneqq\|\mathbf{r}\|\leq a < \infty\}$. The electric permittivity and magnetic permeability distributions are of the form $f(\mathbf{r}) =f\theta(a-r)+f_{0}\theta(r-a)$, where $f=\epsilon,\mu$, stands for the permittivity and the permeability, respectively; subscript $0$ refers to the vacuum; and $\theta$ denotes the Heaviside unit step. For a lossless substrate $\left(\epsilon,\mu\right) \in \mathbb{R}^2$ and, thus, the radiation propagation constant $k \coloneqq \omega \sqrt{\epsilon} \sqrt{\mu} \in \mathbb{C}$ with $\Re[k] \, \Im[k]=0$, where $\omega$ is the radiation frequency, and $\Re$ and $\Im$ stand for the real and imaginary parts, respectively. The generally frequency-dependent constitutive parameters are taken at a given central frequency. Given that $ k_0 \coloneqq \omega\sqrt{\epsilon_{0}}\sqrt{\mu_{0}}$ is the propagation constant in vacuum, one identifies the following cases: (\textit{i}) $k>k_{0} $ for ordinary materials, (\textit{ii}) $0<k<k_{0}$ for double-positive (DPS) metamaterials, (\textit{iii}) $k<0$ for double-negative (DNG) metamaterials, (\textit{iv}) $k\in \mathbb{C}$ for single-negative (SNG) (meta)materials, and (\textit{v}) $k=0$ for nihility metamaterials. This widely used, convenient classification is certainly unrealistic for \textit{passive} media. Indeed, for such media to exhibit a causal response to electromagnetic excitation their constitutive parameters must be complex functions of the radiation frequency that satisfy the classical Kramers-Kronig relations \cite{Jackson_book}. These relations show that the real and imaginary parts of the constitutive parameters are Hilbert transforms of each other and so none of them can vanish identically. It is possible for \textit{active} metamaterials, however, to be causal without satisfying these relations \cite{NistadSkaar2008,Srivastava2021}. Thus, the above classification is not unrealistic for this type of media. Active metamaterials attract considerable interest due to their potential applications \cite{HajianGhobadiButunEtAl2019,XiaoWangLiuEtAl2020,KoenderinkMonticoneStaudeEtAl2021,YouLanMaEtAl2023}. 
  
\subsection{Radiated Fields}
A time-harmonic, electric-current volume density $\mathbf{J}(\mathbf{r},t)\in L^2\left(V;\mathbb{C}^3\right)$ (herein referred to simply as ``the source''), generates an electric field $\mathbf{E}(\mathbf{r},t)$. Outside $V$, the electric field $\mathbf{E}(\mathbf{r})$ may be expressed as a multipole expansion, namely
\begin{equation}
\mathbf{E}(\mathbf{r})=\sum_{l,m}  \mbox{\boldmath${\nabla}$}\times\lbrack h_{l}^{(+)}(k_{0}r)\mathbf{Y}_{l,m}(\hat{\mathbf{r}})] \, a_{l,m}^{(1)}+ ik_{0}h_{l}^{(+)}(k_{0}r)\mathbf{Y}_{l,m}(\hat{\mathbf{r}}) \, a_{l,m}^{(2)},\quad\mathbf{r}\notin V,
\label{Eq:multipole_expansion_1}%
\end{equation}
wherein $a_{l,m}^{(j)}\in \mathbb{C}$ are the multipole moments of the radiated field, $\hat{\mathbf{r}}\coloneqq\mathbf{r}/r$, $h_{l}^{(+)}$ denotes the spherical Hankel function of the first kind and order $l$, corresponding to outgoing spherical waves in the far zone, $\mathbf{Y}_{l,m}$ is the vector spherical harmonic of degree $l$ and order $m$, and $j=1$ and $j=2$ correspond to electric and magnetic multipole fields, respectively. The index $l\in\mathbb{N}^*$ is sometimes referred to as the multipole order of the field; $m=-l,-l+1,...,0,...,l-1,l$. Practically, though, the multipole expansion in (\ref{Eq:multipole_expansion_1}) is truncated at the multipole order $l_{\text{max}}$. That is because fields that differ by less than a prescribed error are essentially indistinguishable. The value of $l_{\text{max}}$ is determined by the spatial extent of the source and the maximum spatial frequency components, and is related to the degrees of freedom of the radiated field, which is equal to the Nyquist number \cite{BucciFranceschetti1987,BucciFranceschetti1989,BucciGennarelliSavarese1998}. It has been shown that an empirical value of $l_{\text{max}}$ which yields very good results is given by $l_{\text{max}}\approx ka+1.8 \, d^{2/3} (ka)^{1/3}$, where $d$ is the number of digits of accuracy \cite{SongChew2001}. Sometimes it suffices to take $l_{\text{max}} \approx ka $. In fact, the spherical Bessel functions reach their maximum amplitudes when their argument is approximately equal to their order and for large orders the maximum is sharply peaked then. The multipole moments $a_{l,m}^{(j)}$ are \emph{uniquely} determined by the projections of $\mathbf{E}(\mathbf{r})$ onto a sphere of radius $R>a$. One finds that the multipole moments are given by
\begin{equation}
a_{l,m}^{(j)}=\left\{
\begin{array}
[c]{l}%
[i l (l+1) k_{0}h_{l}^{(+)}(k_{0}R)]^{-1} \int_{\Omega} \overline{\mathbf{Y}}_{l,m}(\hat{\mathbf{r}}) \cdot \mathbf{E}(R\hat{\mathbf{r}})d\hat{\mathbf{r}}\quad;j=1

\\

[l(l+1)k_{0}v_{l}(k_{0}R)]^{-1} \int_{\Omega} \hat{\mathbf{r}} \times \overline{\mathbf{Y}}_{l,m} \cdot \mathbf{E}(R\hat{\mathbf{r}})d\hat{\mathbf{r}}\quad;j=2,
\end{array}
\right.
\label{MultipoleMoments2}
\end{equation}
where the overline denotes the complex conjugate.
\subsection{Radiated Field Prescription Condition}
The source radiates a prescribed field outside $V$. The prescription condition is expressed as $a_{l,m}^{(j)}=(\mathcal{B}_{l,m}^{(j)},\mathbf{J})$. For piecewise-constant, radially-symmetric propagation media with constitutive parameters given by $f(\mathbf{r})$ we obtain 
\begin{equation}
\mathcal{B}_{l,m}^{(j)}\coloneqq\left\{
\begin{array}
[c]{l}
\frac{-\eta_{0}}{l(l+1)} \overline{F}_{l}^{(1)}(x_{0},x) \, \mbox{\boldmath${\nabla}$} \times \left[  j_{l}(\overline{k}r)\mathbf{Y}_{l,m}\right]\quad;j=1
\\
\frac{-ik_{0}\eta_{0}}{l(l+1)} \overline{F}_{l}^{(2)}(x_{0},x) j_{l}(\overline{k}r) \, \mathbf{Y}_{l,m}(\hat{\mathbf{r}
})\quad;j=2,
\end{array}
\right.
\label{eq_lor17d}
\end{equation}
where $j_{l}$ is the spherical Bessel function of the first kind and order $l$, and $x_{0}$ and $x$ are defined as $x_{0} \coloneqq k_{0}a\coloneqq\omega\sqrt{\epsilon_{0}\mu_{0}}a=2\pi\frac{a}{\lambda_{0}}$, and $x \coloneqq k a\coloneqq\omega\sqrt{\epsilon}\sqrt{\mu}a\coloneqq x\sqrt{\epsilon_{r}}\sqrt{\mu_{r}}=2\pi\frac{a}{\lambda}$. The relative electric permittivity and magnetic permeability are $\epsilon_{r}\coloneqq\epsilon\epsilon_{0}^{-1}$ and $\mu_{r}\coloneqq\mu\mu_{0}^{-1}$, respectively, and $\lambda$ is the wavelength. Explicit expressions for the complex Mie amplitudes $F_{l}^{(1)}(x_{0},x)$ and $F_{l}^{(2)}(x_{0},x)$ are given in \ref{Appendix:A}.

\subsection{The Constrained Minimization Problem}
Let $\mathcal{E}$ be the ``source energy'' defined as the $L^{2}$-norm of the current density, i.e.,
\begin{equation}
\mathcal{E} \coloneqq \left\langle \mathbf{J},\mathbf{J}\right\rangle \coloneqq \int \left|\mathbf{J}(\mathbf{r})\right|^{2}\,d\mathbf{r},
\end{equation}
and let $\Im[\mathcal{P}]$ be the reactive power defined as
\begin{equation}
\Im[\mathcal{P}] \coloneqq \Im[-\frac{1}{2}(\mathbf{J}(\mathbf{r}),\int d\mathbf{r^{\prime}\underline{\mathbf{G}}}(\mathbf{r},\mathbf{r^{\prime}})\cdot\mathbf{J}(\mathbf{r^{\prime}}))], 
\end{equation}
where $\Im$ stands for the imaginary part and $\underline{\mathbf{G}}(\mathbf{r},\mathbf{r^{\prime}})$ is the dyadic Green's function. For a radiating spherical source embedded in vacuum, $\underline{\mathbf{G}}(\mathbf{r},\mathbf{r^{\prime}})$ can be expressed as
\begin{align}
\underline{\mathbf{G}}(\mathbf{r,r}^{\prime})  &  =\sum_{l,m}\frac{-\eta_{0}}{\mu_{r} l\left(l+1\right)}\bigg\{F_l^{(2)}\left[ k^{2} j_{l}(k r_{<})\mathbf{Y}_{l,m}(\hat{\mathbf{r}}_{<})\right]  \left[  {h_{l}^{(+)}}(k_{0}r_{>})\overline{\mathbf{Y}}_{l,m}(\hat{\mathbf{r}}_{>})\right]  \bigg. \nonumber
\\
& +\bigg. F_l^{(1)}\mbox{\boldmath${\nabla}$}\times\left[j_{l}(k r_{<})\mathbf{Y}_{l,m}(\hat{\mathbf{r}}_{<})\right]\mbox{\boldmath${\nabla}$}\times\left[  {h_{l}^{(+)}}(k_{0}r_{>})\overline{\mathbf{Y}}_{l,m}(\hat{\mathbf{r}}_{>})\right]  \bigg\}.
\label{Green_Function}
\end{align}
The $<$ ($>$) subscript designates the smaller (larger) of $r$ and $r^{\prime}$. 

The problem under consideration is
\begin{gather}
\min\limits_{\mathbf{J}\in X} \left\langle \mathbf{J},\mathbf{J}\right\rangle,
\label{optim-problem-1}
\\
X\coloneqq\left\{\mathbf{J}\in L^{2}\left(V;\mathbb{C}^{3}\right):\left\langle \mathcal{B}_{l,m}^{(j)},\mathbf{J}\right\rangle=a_{l,m}^{(j)},\;\Im\left[\mathcal{P}\right]=0\right\}.
\label{optim-problem-2}
\end{gather}
This corresponds to the problem of reconstructing a radiating electromagnetic source embedded in a non-homogeneous background with generalized constitutive parameters by minimizing its $L^{2}$-norm subject to a prescribed radiated field and a vanishing reactive power. As indicated earlier, the minimization of the source's $L^{2}$-norm constitutes a useful criterion for the minimization of the actual physical resources of the radiating source. In addition, requiring the vanishing of the reactive power is desirable in radiating systems as it corresponds to the vanishing of the ``useless'' power.
\subsection{Existence of Solutions}
Theorems \ref{equivalence_thm} and \ref{auxiliary_thm} establish the existence of a solution to the constrained optimization problem (herein referred to as problem~(\ref{optim-problem-1},\ref{optim-problem-2})).

\begin{thm}
\label{equivalence_thm}

Problem (\ref{optim-problem-1},\ref{optim-problem-2}) is equivalent to the auxiliary problem 
\begin{gather}
\min\limits_{\mathbf{J}\in X\cap\overline{B_{\Gamma}\left(\mathbf{J}_{0}\right)}}\mathcal{E}\left(\mathbf{J}\right), \label{NEWoptim-problem}
\\
\overline{B_{\Gamma}\left(  \mathbf{J}_{0}\right)}=\left\{\mathbf{J}\in L^{2}\left(V;\mathbb{C}^{3}\right):\|\mathbf{J}-\mathbf{J}_{0}\|\leq\Gamma \right\}.
\label{NEWoptim-problem2}
\end{gather}
\end{thm}
 
\begin{proof}
Since $X$ is a closed (unbounded, and nonconvex) subset of a normed vector space and since $\mathcal{E}$ is a coercive functional, then
there exist $\mathbf{J}_{0}\in X$ and $\Gamma>0$
such that
\begin{equation}
\inf\limits_{\mathbf{J}\in X}\mathcal{E}\left(  \mathbf{J}\right)
=\inf\left\{  \mathcal{E}\left(  \mathbf{J}\right)  :\mathbf{J}\in
X\cap\overline{B_{\Gamma}\left(  \mathbf{J}_{0}\right)  }\right\} .
\label{auxiliary}%
\end{equation}
\end{proof}
\begin{thm}
\label{auxiliary_thm}
There exists a solution to the auxiliary problem~(\ref{NEWoptim-problem},\ref{NEWoptim-problem2}).
\end{thm}
\begin{proof}
Given that 
\begin{enumerate}
	\item $\mathcal{E}$ is a weakly sequentially lower semi-continuous functional, and
	\item $X\cap\overline{B_{\Gamma}\left(  \mathbf{J}_{0}\right)}$ is a weakly sequentially compact subset of a Hilbert space
\end{enumerate}
 there exists, by virtue of the generalized Weierstrass theorem, at least one solution to problem~(\ref{NEWoptim-problem},\ref{NEWoptim-problem2}). Consequently, there exists at least one solution to problem~(\ref{optim-problem-1},\ref{optim-problem-2}). 
\end{proof}
%
\subsection{Explicit Form of Solutions}
Let $\mathbf{J}_{\mathcal{E},\mathcal{P}}$ be a minimizer. Using the Lagrange multiplier method, it can be shown that 
\begin{equation}
\mathbf{J}_{\mathcal{E},\mathcal{P}}(\mathbf{r})=\sum_{j,l,m}\frac{a_{l,m}^{(j)}}{\left\langle \mathcal{B}_{l,m}^{(j)},\mathcal{D}_{l,m}^{(j)}\right\rangle}\mathcal{D}_{l,m}^{(j)}(\mathbf{r}),
\label{eqPP70}%
\end{equation}
where
\begin{equation}
\mathcal{D}_{l,m}^{(j)}(\mathbf{r})=\left\{
\begin{array}
[c]{l}
-\frac{\eta_{0}}{l(l+1)}\mbox{\boldmath${\nabla}$}\times\lbrack j_{l}(K_{\chi}r)\mathbf{Y}_{l,m}]\quad;j=1
\\
-\frac{i\eta_{0}K}{l(l+1)}j_{l}(K_{\chi}r)\mathbf{Y}_{l,m}(\hat{\mathbf{r}}
)\quad;j=2,
\end{array}
\right.
\label{eqPP52}
\end{equation}
and
\begin{equation}
\left\langle \mathcal{B}_{l,m}^{(j)},\mathcal{D}_{l,m}^{(j)}\right\rangle=\left\{
\begin{array}
[c]{l}%
\eta_{0}^{2}F_{l}^{\left(  1\right)  }\int_{0}^{a}dr\left[  j_{l}%
(kr)j_{l}(K_{\chi}r)+\frac{kK_{\chi}r^{2}}{l(l+1)} u_{l}\left(  kr\right)  u_{l}\left(
K_{\chi}r\right)  \right];j=1
\\
\eta_{0}^{2}F_{l}^{\left(  2\right)  }\frac{k_{0}K_{\chi}}{l(l+1)}\int_{0}^{a}%
drr^{2}j_{l}(kr)j_{l}(K_{\chi}r)\quad;j=2.
\end{array}
\right.
\label{BD_product}
\end{equation}
Radial functions $u_{l}$ are defined in \ref{Appendix:A}. The quantity $K_{\chi}=\sqrt{{k^{2}-\chi\mu}{\omega}}$, where $\chi\in\mathbb{R}$ is the Lagrange multiplier, is a modified propagation constant which appears in the wave equation $\mbox{\boldmath${\nabla}$}\times\mbox{\boldmath${\nabla}$}\times \mathbf{J}_{\mathcal{E},\mathcal{P}}(\mathbf{r})-K_{\chi}^{2}\mathbf{J}_{\mathcal{E},\mathcal{P}}(\mathbf{r})=\mathbf{0}$. This equation governs the spatiotempral variation of the source's optimized current distribution. It is obtained with the assumption that $k^2 \in \mathbb{R}$. This condition is satisfied by DPS, DNG, SNG, and nihility media. To have a time-harmonic current distribution transmitting a time-harmonic electromagnetic field through the source's substrate, we need to have $K_{\chi}^2>0$. 

It can also be shown that the minimum energy of the tuned source is given by 
\begin{equation}
\mathcal{E}_{\mathcal{E},\mathcal{P}}=\sum_{j,l,m} T_{l,m}^{(j)} \, |a_{l,m}^{(j)}|^{2}, 
\label{Tuned_Energy}
\end{equation}
where
\begin{equation}
T_{l,m}^{(j)} \coloneqq \frac{\left\langle \mathcal{D}_{l,m}^{(j)},\mathcal{D}_{l,m}^{(j)}\right\rangle}{\left\langle \mathcal{B}_{l,m}^{(j)},\mathcal{D}_{l,m}^{(j)}\right\rangle^2}
\label{Rjlm-Definition}
\end{equation}  
and
\begin{equation}
\left\langle \mathcal{D}_{l,m}^{(j)},\mathcal{D}_{l,m}^{(j)}\right\rangle=\left\{
\begin{array}
[c]{l}%
\eta_{0}^{2}  \int_{0}^{a}\left[ \left\vert j_{l}(K_{\chi} r)\right\vert ^{2}+\frac{\left\vert K_{\chi} \, r \, u_{l}(K_{\chi} r)\right\vert ^{2}}{l\left(  l+1\right)  }\right]dr  \quad;j=1
\\
\frac{\eta_{0}^{2}|K_{\chi}|^{2}}{l(l+1)} \int_{0}^{a}\left\vert r
j_{l}(K_{\chi} r)\right\vert ^{2}dr \quad;j=2.
\end{array}
\right.
\label{DD_product}
\end{equation}
In terms of the field multipole moments, the explicit expression of the complex interaction power reads 
\begin{equation}
\mathcal{P=}\sum_{j,l,m}   \frac{i}{2\chi} \left[ \frac{\left\langle \mathcal{D}_{l,m}^{(j)},\mathcal{D}_{l,m}^{(j)}\right\rangle}{\left\vert \left\langle \mathcal{B}_{l,m}^{(j)},\mathcal{D}_{l,m}^{(j)}\right\rangle \right\vert ^{2}}+\gamma_{l,m}^{(j)}\right]  |a_{l,m}^{(j)}|^{2}, 
\label{CompIntPower1}
\end{equation}
while the reactive power is given by
\begin{equation}
\Im\left[  \mathcal{P}\right]  =\sum_{j,l,m}   \frac{1}{2\chi} \left[ \frac{\left\langle \mathcal{D}_{l,m}^{(j)},\mathcal{D}_{l,m}^{(j)}\right\rangle}{ \left\vert \left\langle \mathcal{B}_{l,m}^{(j)},\mathcal{D}_{l,m}^{(j)}\right\rangle  \right\vert ^{2}}
+\Re\left[\gamma_{l,m}^{(j)}\right] \right]   |a_{l,m}^{(j)}|^{2}, 
\label{eqU15}
\end{equation}
where
\begin{equation}
\gamma_{l,m}^{(j)}=\left\{
\begin{array}
[c]{l}
\left[\,\overline{F_{l}^{(1)} k \, u_{l}(k a)}\,\right]^{-1} \left[-i\frac{k_{0}}{\eta_{0}}\chi l(l+1)v_{l}(k_{0}a)-\frac{K_{\chi} u_{l}(K_{\chi}a)}{ \left\langle \mathcal{B}_{l,m}^{(1)},\mathcal{D}_{l,m}^{(1)}\right\rangle} \right];j=1
\\ 
\left[\,\overline{F_{l}^{(2)}k_{0}j_{l}(k a)}\,\right]^{-1} \left[-i\frac{k_{0}}{\eta_{0}}\chi l(l+1)h_{l}^{(+)}(k_{0}a)-\frac{K_{\chi}j_{l}(K_{\chi}a)}{ \left\langle   \mathcal{B}_{l,m}^{(2)},\mathcal{D}_{l,m}^{(2)}\right\rangle} \right]; j=2.
\end{array}
\right.
\label{eqU1}
\end{equation}

\section{Results}
\label{Sec_Results}
In this section, we establish the new results of this paper, namely, the lower boundedness, global minimality, and uniqueness of the tuned minimal source. We also present a numerical study to give an intuitive understanding of the properties of this source in a few typical cases.
\subsection{Boundedness}
\begin{thm}
The untuned minimum source energy $\mathcal{E}_{\mathcal{E}}$ is a lower bound on the tuned minimum source energy $\mathcal{E}_{\mathcal{E},\mathcal{P}}$ for source substrates with generalized constitutive parameters $\left(\epsilon,\mu\right) \in \mathbb{R}^2$.
\label{Theorem_Boundedness}
\end{thm}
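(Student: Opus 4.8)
The plan is to reduce the statement to the elementary monotonicity of infima over nested admissible sets, so that the real work is only in identifying those sets and in checking that the objects being compared are well posed. First I would fix notation: write $Y\equiv\{\mathbf{J}\in L^{2}(V;\mathbb{C}^{3}):(\mathcal{B}_{l,m}^{(j)},\mathbf{J})=a_{l,m}^{(j)}\ \text{for all}\ (l,m,j)\}$ for the affine subspace cut out by the field-prescription constraints alone, so that $\mathcal{E}_{\mathcal{E}}=\min_{\mathbf{J}\in Y}(\mathbf{J},\mathbf{J})$ is the untuned minimum, attained at the unique least-norm element $\mathbf{J}_{\mathcal{E}}\in Y$ (the orthogonal projection of the origin onto $Y$, lying in $\overline{\mathrm{span}}\{\mathcal{B}_{l,m}^{(j)}\}$). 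The tuned admissible set is $X=Y\cap\{\mathbf{J}:\Im[\mathcal{P}(\mathbf{J})]=0\}$ and $\mathcal{E}_{\mathcal{E},\mathcal{P}}=\min_{\mathbf{J}\in X}(\mathbf{J},\mathbf{J})$.

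The core step is the inclusion $X\subseteq Y$: a reactively tuned source is in particular field-admissible. Consequently $\{(\mathbf{J},\mathbf{J}):\mathbf{J}\in X\}\subseteq\{(\mathbf{J},\mathbf{J}):\mathbf{J}\in Y\}$, and passing to infima gives $\mathcal{E}_{\mathcal{E},\mathcal{P}}=\inf_{\mathbf{J}\in X}(\mathbf{J},\mathbf{J})\geq\inf_{\mathbf{J}\in Y}(\mathbf{J},\mathbf{J})=\mathcal{E}_{\mathcal{E}}$. I would accompany this with the sharpness remark that equality holds exactly when $\Im[\mathcal{P}(\mathbf{J}_{\mathcal{E}})]=0$, i.e. the untuned optimum is already tuned, and is strict otherwise; so $\mathcal{E}_{\mathcal{E}}$ is the tightest bound on $\mathcal{E}_{\mathcal{E},\mathcal{P}}$ expressible through the prescription data without reference to the tuning functional. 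As a cross-check I would also substitute the explicit minimizers and energies recalled from \cite{MKB_SIAM_2008} and verify the inequality directly; there it reduces to the sign of the relevant radial (Bessel-function) overlap integrals, which is precisely what the condition $\epsilon_{r}\mu_{r}>0$ pins down.

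The role of the hypothesis $\epsilon_{r}\mu_{r}>0$, equivalently $k\in\mathbb{R}^{*}$, is to keep the comparison non-vacuous, i.e. to ensure both sides are genuine finite minima. Under it the lossless Green's dyadic is non-degenerate and non-nihility, the representers $\mathcal{B}_{l,m}^{(j)}$ are well defined, the prescription system is consistent ($Y\neq\varnothing$), and --- by the existence result of \cite{MKB_SIAM_2008} --- the reactive-power quadric meets $Y$, so $X\neq\varnothing$ and $\mathcal{E}_{\mathcal{E},\mathcal{P}}<\infty$. The mechanism behind that last point, which I would either spell out or cite, is that for $k\neq0$ the self-adjoint quadratic form $\mathbf{J}\mapsto\Im[\mathcal{P}(\mathbf{J})]$ is indefinite on the null space of the prescription constraints (it carries inductive and capacitive contributions of opposite sign), so one can always add a tuning component orthogonal to the prescription data that drives $\Im[\mathcal{P}]$ to zero while preserving $(\mathcal{B}_{l,m}^{(j)},\mathbf{J})=a_{l,m}^{(j)}$.

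The only place I anticipate real friction is this last verification --- that tuning is genuinely achievable and that it is exactly the nihility case $k=0$ that has to be excluded for the compared objects to be well posed. The inclusion inequality $\mathcal{E}_{\mathcal{E},\mathcal{P}}\geq\mathcal{E}_{\mathcal{E}}$ itself is immediate once the admissible sets are named; the care lies entirely in confirming that the framework of \cite{MKB_SIAM_2008} applies under the stated condition on the constitutive parameters.
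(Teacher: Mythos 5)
Your proposal is correct, but it takes a genuinely different route from the paper. You reduce the theorem to monotonicity of infima over nested feasible sets: since the tuned admissible set $X$ is contained in the set $Y$ cut out by the prescription constraints alone, every feasible tuned source (in particular every member of the Lagrange-multiplier-parameterized family $\left.\mathcal{E}_{\mathcal{E},\mathcal{P}}\right|_{\chi}$, $\chi\in\Xi$) has energy at least $\min_{\mathbf{J}\in Y}(\mathbf{J},\mathbf{J})=\mathcal{E}_{\mathcal{E}}$, and your sharpness remark (equality iff $\Im[\mathcal{P}(\mathbf{J}_{\mathcal{E}})]=0$, i.e. the untuned least-norm element is already tuned) is the same condition the paper phrases as $0\in\Xi$. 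The paper instead substitutes the explicit modal expressions of the two energies from \cite{MKB_SIAM_2008}, reduces (\ref{Lower_Bound_Condition-1}) to the mode-by-mode sufficient condition (\ref{Lower_Bound_Condition-3}), and verifies it via Cauchy--Schwarz--Bunyakovsky inequalities for the radial Bessel-function integrals, with the hypothesis $k\in\mathbb{R}^{*}$ entering through the square-integrability conditions and the physical requirement $K^{2}>0$. What your argument buys is brevity and generality: the inequality is structural, independent of the modal machinery, and it makes transparent that $\epsilon_{r}\mu_{r}>0$ is needed only for well-posedness and physical consistency (real $K$, nonempty $X$), not for the comparison itself. What the paper's computation buys is the stronger termwise bound $\left.R_{l,m}^{(j)}\right|_{\chi}\geq\left.R_{l,m}^{(j)}\right|_{0}$ together with the explicit equality criterion $j_{l}(kr)=j_{l}(Kr)$, which is exactly what the corollary and the global-minimality and uniqueness arguments in Section \ref{Sec_Global_Minimality} lean on. Two small caveats: your equality characterization implicitly uses attainment of the tuned minimum and uniqueness of the least-norm element of $Y$, so you should cite the existence results of \cite{MKB_SIAM_2008} (as the paper does) at that point rather than only for nonemptiness of $X$; and your heuristic for why the reactive-power constraint can always be met (indefiniteness of $\Im[\mathcal{P}]$ on the null space of the prescription constraints) is plausible but unproven here, so it is safer to present it purely as a citation to the prior work.
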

\begin{proof}
The necessary and sufficient condition for $\mathcal{E}_{\mathcal{E}}$ to be a unique lower bound on $\mathcal{E}_{\mathcal{E},\mathcal{P}}$ may be expressed as
\begin{equation}
\mathcal{E}_{\mathcal{E}} \leq \mathcal{E}_{\mathcal{E},\mathcal{P}}.
\label{Lower_Bound_Condition-1}
\end{equation}
Substituting the expressions of $\mathcal{E}_{\mathcal{E}}$ and $\mathcal{E}_{\mathcal{E},\mathcal{P}} $ from (\ref{Tuned_Energy}) yields
\begin{equation}
\sum_{j,l,m}\left(\left.T_{l,m}^{(j)}\right|_{\chi}-\left.T_{l,m}^{(j)}\right|_{0}\right)\left|a_{l,m}^{(j)}\right|^{2} \geq 0. 
\label{Lower_Bound_Condition-2}
\end{equation}
A sufficient condition for this to hold is that
\begin{equation}
\left.T_{l,m}^{(j)}\right|_{\chi}-\left.T_{l,m}^{(j)}\right|_{0} \geq 0, \quad \forall j,l,m.
\label{Lower_Bound_Condition-3}
\end{equation}
Substituting the explicit expression for $T_{l,m}^{(j)}$ (from (\ref{Rjlm-Definition}), (\ref{BD_product}), and (\ref{DD_product})) yields the following conditions
\begin{subnumcases}
{}
$ $\left[\int_{0}^{a} \left( \left| j_{l}(k r) \right|^{2} + \frac{\left| k\, r\, u_{l}(k r) \right|^{2}}{l(l+1)} \right) dr \right]\left[\int_{0}^{a} \left( \left| j_{l}(K_{\chi} r) \right|^{2} + \frac{\left| K_{\chi}\, r\, u_{l}(K_{\chi} r) \right|^{2}}{l(l+1)} \right) dr \right] $ $ \nonumber 
\\
$ $ \qquad \qquad \qquad \qquad \quad \geq \left| \int_{0}^{a} \left( j_{l}(k r) j_{l}(K_{\chi} r) + \frac{k K_{\chi}\, r^{2}\, u_{l}(k r) u_{l}(K_{\chi} r)}{l(l+1)} \right) dr \right|^{2};$ $ & $j=1$ \label{Lower_Bound_Condition-4}
\\[12pt]
$ $\left[ \int_{0}^{a} \left| r\, j_{l}(k r) \right|^{2} dr \right]\left[ \int_{0}^{a} \left| r\, j_{l}(K_{\chi} r) \right|^{2} dr \right] \geq \left| \int_{0}^{a} j_{l}(k r)\, j_{l}(K_{\chi} r)\, r^{2} dr\right|^{2};$ $ & $j=2$ \label{Lower_Bound_Condition-5}
\end{subnumcases}

Inequality~(\ref{Lower_Bound_Condition-5}) is satisfied by virtue of the Cauchy-Schwarz theorem, since the integrals are finite Lommel integrals (See equations~(\ref{Lower_Bound_Condition-6a}) and (\ref{Lower_Bound_Condition-6b})). As for (\ref{Lower_Bound_Condition-4}), it is more instructive to use its original form, namely,
\begin{equation}
\begin{aligned}
&\left[\int_{\Omega}d\hat{\mathbf{r}}\int_{0}^{a} \left|\mbox{\boldmath${\nabla}$} \times \left[ j_{l}(k r)\mathbf{Y}_{l,m}\right] \right|^2dr \right]       \left[\int_{\Omega}d\hat{\mathbf{r}}\int_{0}^{a} \left|\mbox{\boldmath${\nabla}$} \times \left[ j_{l}(K_{\chi} r)\mathbf{Y}_{l,m}\right] \right|^2 dr \right]
\\
& \qquad \qquad \qquad \geq \left| \int_{\Omega}d\hat{\mathbf{r}}\int_{0}^{a} \overline{\mbox{\boldmath${\nabla}$} \times \left[ j_{l}(k r)\mathbf{Y}_{l,m}\right]} \cdot \mbox{\boldmath${\nabla}$} \times \left[ j_{l}(K_{\chi} r)\mathbf{Y}_{l,m}\right]dr \right|^2.
\label{Lower_Bound_Condition-7}
\end{aligned}
\end{equation}
This expression clearly shows that we have another inequality of the Cauchy-Schwarz type provided that  
\begin{equation}
\int_{\Omega}d\hat{\mathbf{r}} \int_{0}^{a} \left|\mbox{\boldmath${\nabla}$} \times \left[ j_{l}(\alpha r)\mathbf{Y}_{l,m}\right]\right|^{2}<\infty.
\label{Lower_Bound_Condition-8}
\end{equation}
For this square-integrability criterion to be satisfied, it is sufficient to show that the spherical Bessel functions $j_{l}$ and their combinations $(r \, u_{l})$ are square-integrable, that is,  
\begin{equation}
\int_{0}^{a} \left| j_{l}(\alpha r) \right|^{2} dr <\infty
\label{Lower_Bound_Condition-10-1}
\end{equation}
and 
\begin{equation}
\int_{0}^{a} \left| r \, u_{l}(\alpha r)\right|^{2} dr < \infty. 
\label{Lower_Bound_Condition-10-2}
\end{equation}
These conditions involve definite integrals of the form
\begin{equation}
\int_0^a r^s j_{l}(\alpha r) j_{l'}(\alpha' r) dr
\label{Eq:results_22}
\end{equation}
where $l,l^{'} \in \mathbb{N}^{*}$, $\alpha,\alpha' \in \mathbb{C}$, and $s=0,2$ (see \ref{Appendix:B}). Closed-form expressions are not necessary to establish (\ref{Lower_Bound_Condition-10-1}) and (\ref{Lower_Bound_Condition-10-2}). These conditions simply result from the fundamental theorem of calculus for complex functions, since the integrand in (\ref{Eq:results_22}) is the product of entire functions, i.e., complex-valued functions that are holomorphic, and thus continuous, on the whole complex plane \cite{Cartan1995}. Other useful properties of the integrand are discussed in \ref{Appendix:B}. 

The equality in (\ref{Lower_Bound_Condition-4}), (\ref{Lower_Bound_Condition-5}), and (\ref{Lower_Bound_Condition-7}) holds if and only if $j_{l}(k r)=j_{l}(K_{\chi} r)$, \emph{i.e.}, when $0 \in \Xi$, where $\Xi \coloneqq\left\{\chi\in\mathbb{R}:\Im\left[\mathcal{P}\right]=0\right\}$, i.e, it is the set of all the Lagrange multipliers that satisfy the tuning constraint.
\end{proof}
\subsection{Global Minimality and Uniqueness}
Let $\chi_0 \in \Xi$ be the element that satisfies
\begin{equation}
\left|\chi_{0}\right| =\inf\limits_{\chi\in\Xi}\left\{  \left|\chi\right| \right\}.
\label{Chi_Nought_Def}
\end{equation}
\begin{cor}
When the unique lower bound $\mathcal{E}_{\mathcal{E}}$ belongs to the set of tuned minimum source energies $\left.\mathcal{E}_{\mathcal{E},\mathcal{P}}\right|_{\chi}$, it is also the unique global minimum.
\label{Theorem_Global_Minimality_1}
\end{cor}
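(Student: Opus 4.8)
The plan is to obtain the corollary as a direct consequence of the lower-bound inequality of Theorem~\ref{Theorem_Boundedness} together with the equality-case analysis performed at the end of its proof. First I would fix the set-up: by Theorem~\ref{Theorem_Boundedness}, $\mathcal{E}_{\mathcal{E}} \le \left.\mathcal{E}_{\mathcal{E},\mathcal{P}}\right|_{\chi}$ for every $\chi \in \Xi$, so $\mathcal{E}_{\mathcal{E}}$ is a lower bound for the set $S \equiv \bigl\{\left.\mathcal{E}_{\mathcal{E},\mathcal{P}}\right|_{\chi} : \chi \in \Xi\bigr\}$ of tuned minimum source energies; and since every tuned source $\left.\mathbf{J}_{\mathcal{E},\mathcal{P}}\right|_{\chi}$ with $\chi \in \Xi$ is admissible for the constrained problem (\ref{optim-problem-1})--(\ref{optim-problem-2}), $\mathcal{E}_{\mathcal{E}}$ is also a lower bound for the constrained optimum itself. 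The hypothesis of the corollary is exactly $\mathcal{E}_{\mathcal{E}} \in S$. A lower bound of a set that lies in the set is its minimum, so $\mathcal{E}_{\mathcal{E}} = \min S$ is simultaneously the greatest lower bound of $S$ and an attained admissible value, i.e. the global minimum of (\ref{optim-problem-1})--(\ref{optim-problem-2}).

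For uniqueness I would descend from the aggregated inequality (\ref{Lower_Bound_Condition-2}) to the termwise inequality (\ref{Lower_Bound_Condition-3}). Writing $\mathcal{E}_{\mathcal{E}} = \sum_{j,l,m}\left.R_{l,m}^{(j)}\right|_{0}\left|a_{l,m}^{(j)}\right|^{2}$ and $\left.\mathcal{E}_{\mathcal{E},\mathcal{P}}\right|_{\chi} = \sum_{j,l,m}\left.R_{l,m}^{(j)}\right|_{\chi}\left|a_{l,m}^{(j)}\right|^{2}$, the assumed equality $\mathcal{E}_{\mathcal{E}} = \left.\mathcal{E}_{\mathcal{E},\mathcal{P}}\right|_{\chi^{\ast}}$ for some $\chi^{\ast} \in \Xi$ gives $\sum_{j,l,m}\bigl(\left.R_{l,m}^{(j)}\right|_{\chi^{\ast}} - \left.R_{l,m}^{(j)}\right|_{0}\bigr)\left|a_{l,m}^{(j)}\right|^{2} = 0$; since every summand is nonnegative by (\ref{Lower_Bound_Condition-3}), each term with $a_{l,m}^{(j)} \ne 0$ must vanish. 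For a nontrivial prescribed field some $a_{l,m}^{(j)} \ne 0$, so the corresponding Cauchy--Schwarz--Bunyakovsky inequality in (\ref{Lower_Bound_Condition-4})--(\ref{Lower_Bound_Condition-5}) is saturated, and the equality-case analysis of Theorem~\ref{Theorem_Boundedness} then forces $j_{l}(kr) = j_{l}(Kr)$ on $[0,a]$, whence $K^{2} = k^{2}$ and therefore $\chi^{\ast} = 0$ (recall $K^{2} = k^{2} - \chi\mu\omega$). Thus $\mathcal{E}_{\mathcal{E}} \in S$ is equivalent to $0 \in \Xi$; in that case the global minimum is attained at the single Lagrange multiplier $\chi_{0} = 0$ of (\ref{Chi_Nought_Def}), the optimal current $\left.\mathbf{J}_{\mathcal{E},\mathcal{P}}\right|_{\chi = 0}$ coincides with the untuned minimal source $\mathbf{J}_{\mathcal{E}}$, and $\mathbf{J}_{\mathcal{E}}$ is the \emph{unique} minimum-$L^{2}$-norm solution of the linear prescription constraints $(\mathcal{B}_{l,m}^{(j)},\mathbf{J}) = a_{l,m}^{(j)}$ (the orthogonal projection of the origin onto the corresponding closed affine subspace of $L^{2}(V;\mathbb{C}^{3})$). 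A fortiori it is the unique minimizer over the smaller admissible set, so the global minimizer is unique both in its energy value and in its form.

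The step I expect to be the main obstacle is the uniqueness claim: one must rule out that two distinct admissible multipliers yield the same energy without the underlying Cauchy--Schwarz--Bunyakovsky inequalities being tight, which is exactly why the argument cannot stop at the aggregated inequality (\ref{Lower_Bound_Condition-2}) but has to be pushed down to the sharp termwise condition $j_{l}(kr) = j_{l}(Kr)$; one additionally needs the (standard) uniqueness of the norm-minimizing solution of the linear constraints, which is where the Hilbert-space projection structure of (\ref{optim-problem-1})--(\ref{optim-problem-2}) enters. Beyond these two points, the corollary is just bookkeeping built on Theorem~\ref{Theorem_Boundedness}.
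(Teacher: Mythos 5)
Your argument is correct and follows essentially the same route as the paper: the paper's proof of the corollary is a one-line appeal to Theorem~\ref{Theorem_Boundedness}, with the equality-case analysis (saturation of the Cauchy--Schwarz--Bunyakovsky inequalities forcing $j_{l}(kr)=j_{l}(Kr)$, i.e.\ $0\in\Xi$) already recorded at the end of that theorem's proof. Your write-up simply makes explicit what the paper leaves implicit, adding the standard minimum-norm (projection) uniqueness remark, which is consistent with the paper's intent.
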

\begin{proof}
This corollary follows directly from the lower-boundedness of $\mathcal{E}_{\mathcal{E}}$ (Theorem \ref{Theorem_Boundedness}). 
\end{proof}
\begin{thm}
The tuned minimum source energy $\left.\mathcal{E}_{\mathcal{E},\mathcal{P}}\right|_{\chi_{0}}$ is the unique global minimum on the tuned minimum energy $\mathcal{E}_{\mathcal{E},\mathcal{P}}$ for source substrates with generalized constitutive parameters $\left(\epsilon,\mu\right) \in \mathbb{R}^2$ and for which the tuned source is a ``perturbation'' of the untuned source.
\label{Theorem_Global_Minimality_2}
\end{thm} 
\begin{proof}
The necessary and sufficient condition for $\left.\mathcal{E}_{\mathcal{E},\mathcal{P}}\right|_{\chi_{0}}$ to be a unique global minimum on $\mathcal{E}_{\mathcal{E},\mathcal{P}}$ may be expressed as
\begin{equation}
\left.\mathcal{E}_{\mathcal{E},\mathcal{P}}\right|_{\chi_{0}} < \left.\mathcal{E}_{\mathcal{E},\mathcal{P}}\right|_{\chi}.
\label{Global_Min_Condition-1}
\end{equation}
To establish (\ref{Global_Min_Condition-1}) we expand $\left.\mathcal{E}_{\mathcal{E},\mathcal{P}}\right|_{\chi}$ in formal Taylor series about $\chi=0$.
The series expansion yields
\begin{equation}
\left.\mathcal{E}_{\mathcal{E},\mathcal{P}}\right|_{\chi} = c_{0}+c_{1} \chi+c_{2} \chi^2+O[\chi^3], 
\label{Global_Min_Condition-7_1}
\end{equation}
where
\begin{equation}
c_{k} \coloneqq \frac{1}{k!} \left.\frac{\partial ^k \mathcal{E}_{\mathcal{E},\mathcal{P}}}{\partial \chi^k}\right|_{0}=\frac{1}{k!} \sum_{j,l,m}\left.\frac{\partial ^k T_{l,m}^{(j)}}{\partial \chi^k}\right|_{0}\left|a_{l,m}^{(j)}\right|^{2}, \quad k=0,1,2.
\end{equation}
where, in the last step, we have used (\ref{Tuned_Energy}). The higher-order terms $O\left(\chi ^3\right)$, i.e., the remainder of the expansion, can be expressed in the Lagrange form as
\begin{equation}
    O\left(\chi ^3\right) = \frac{1}{3!} \left.\frac{\partial ^3 \mathcal{E}_{\mathcal{E},\mathcal{P}}}{\partial \chi^3} \right|_{\xi} \chi^3,
\end{equation}
where $\xi$ is between 0 and $ \chi$. Due to the intricate form of the third derivative, no attempt will be made here to discuss the boundedness of $O\left(\chi ^3\right)$. In what follows, we shall assume that it is negligible compared to the main term of the series. Systems for which this would be a valid assumption are systems for which the tuned source is a ``perturbation'' (in this sense) of the untuned source. This hypothesis is supported by the numerical study in \cite{MKB_SIAM_2008}. Now, $\left.\mathcal{E}_{\mathcal{E},\mathcal{P}}\right|_{0}$ is a lower bound on $\left.\mathcal{E}_{\mathcal{E},\mathcal{P}}\right|_{\chi} $ (theorem \ref{Theorem_Boundedness}). It follows that $\chi=0$ is a local minimum of $\left.\mathcal{E}_{\mathcal{E},\mathcal{P}}\right|_{\chi}$  with respect to $\chi$. Therefore, 
\begin{equation}
\left.\frac{\partial \mathcal{E}_{\mathcal{E},\mathcal{P}}}{\partial \chi}\right|_{0}=0 \Leftrightarrow c_{1} = 0
\end{equation}
and
\begin{equation}
\left.\frac{\partial ^2 \mathcal{E}_{\mathcal{E},\mathcal{P}}}{\partial \chi^2}\right|_{0} \geq 0 \Leftrightarrow c_{2} \geq 0,
\end{equation}
which establishes (\ref{Global_Min_Condition-1}).
\end{proof}
\subsection{Discussion}
In this section we briefly discuss the properties of a few typical solutions of the problem. In Fig.~\ref{fig:React_Power}, contour plots of $\Im\left[  \mathcal{P}\right]  =0$ are shown for sources with electrical sizes $k_{0}a=\pi/4, \pi/2, \pi$, and $2\pi$. The plots illustrate the conditions on permittivity and permeability under which the reactive power vanishes. Several general features are immediately apparent. First, as the electrical size of the source decreases—that is, as the ratio of the physical source size to the wavelength of emitted radiation becomes smaller—the range of permittivity and permeability values that yield zero reactive power also narrows. This is in line with the established fact that tuning and impedance matching electrically large antennas is easier due to their reduced reactive components. Second, for an active metamaterial substrate, the source tuning could remain stable during operation if the substrate’s permittivity and permeability values vary along specific curves within the reactive power plot. Also, for any given permittivity (or permeability), there exists a discrete, potentially large set of permeability (or permittivity) values that allow for effective source tuning. Lastly, the observed tuning behavior seems to favor DPS and DNG substrates over SNG substrates. If this feature persists in a more realistic treatment where, among other things, the $L^{2}$-norm of the current density—commonly referred to as ``source energy''—is replaced by the minimization of the actual energy, then one might speculate that an interplay of physical phenomena strongest in DPS and DNG materials is responsible for this effect. These include photonic band gaps, backward wave propagation, and electromagnetically induced transparency. In contrast, plasmon-polariton interactions—most pronounced in epsilon-negative (ENG) materials—and magnetic surface polaritons—most pronounced in mu-negative (MNG) materials—are likely to play a minor role. The numerical study shows, though, that the zero-reactive power curves can extend into the second quadrant of the plots (where $\epsilon < 0$) rather than the fourth (where $\mu < 0$).

\begin{figure}
    \centering
    \subfigure[]{\includegraphics[width=0.49\textwidth]{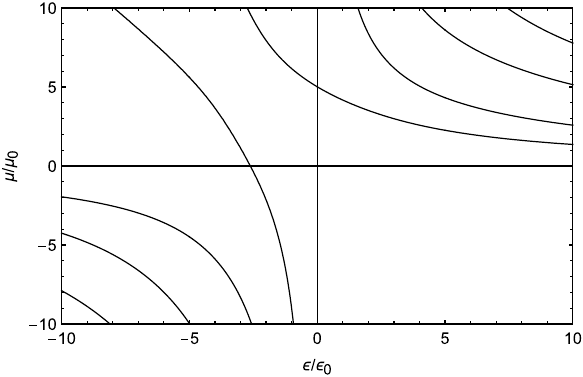}} 
    \subfigure[]{\includegraphics[width=0.49\textwidth]{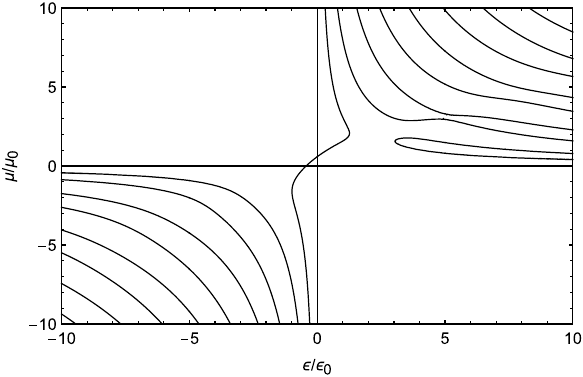}}
    \subfigure[]{\includegraphics[width=0.49\textwidth]{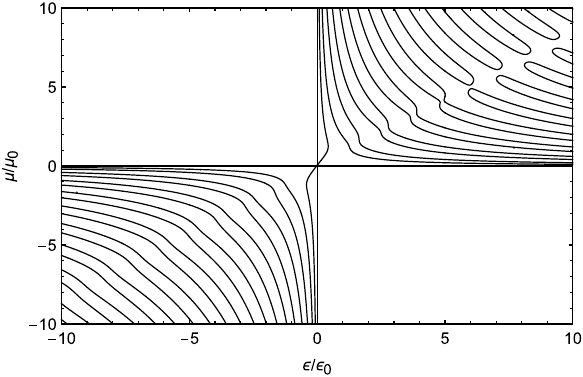}}
    \subfigure[]{\includegraphics[width=0.49\textwidth]{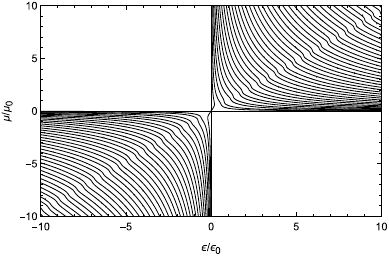}}
    \caption{Contour plots of $\Im\left[  \mathcal{P}\right]  =0$ for sources with (a) $k_{0}a=\pi/4$, (b) $k_{0}a=\pi/2$, (c) $k_{0}a=\pi$, and (d) $k_{0}a=2\pi$.}
    \label{fig:React_Power}
\end{figure}

\section{Conclusion}
This work has established global minimality and uniqueness for electromagnetic inverse source problems in complex, non-homogeneous media with generalized constitutive parameters. The initial question concerning the existence of a solution was answered in \cite{MKB_SIAM_2008}. The current density was reconstructed by minimizing its $L^{2}$-norm subject to a prescribed radiated field and a vanishing reactive power. The problem is both intriguing and challenging due to its general formulation, which requires minimizing an objective functional with nonconvex functional constraints on an unbounded domain. Numerical study reveals that for active metamaterial substrates, tuning could be maintained as permittivity and permeability vary along specific zero-reactive power curves. Also, for any given permittivity (or permeability), a range of corresponding dual parameters allows effective tuning. The observed tuning behavior appears to favor DPS and DNG substrates over SNG substrates, potentially due to dominant phenomena such as photonic band gaps, backward wave propagation, and electromagnetically induced transparency. Conversely, plasmon-polariton interactions—most pronounced in epsilon-negative (ENG) materials—and magnetic surface polaritons—most pronounced in mu-negative (MNG) materials—appear to have limited impact. This work fills an analytical gap by confirming boundedness, minimality, and uniqueness and provides insights into the practical tuning behavior of sources within metamaterials.

\section*{Disclosure of Interest}
The author reports that there are no competing interests to declare.

\section*{Data Availability Statement}
Data sharing is not applicable to this article as no new data were created or analyzed in this study.

\section*{Acknowledgment}
The author gratefully acknowledges partial support from Prince Mohammad Bin Fahd University and King Fahd University of Petroleum and Minerals.

\appendix

\section{Complex Mie amplitudes $F_{l}^{(1)}(x_{0},x)$ and $F_{l}^{(2)}(x_{0},x)$}
\label{Appendix:A}
The complex Mie amplitudes $F_{l}^{(1)}(x_{0},x)$ and $F_{l}^{(2)}(x_{0},x)$ introduced in (\ref{eq_lor17d}) are given by
\begin{align}
F_{l}^{(1)}(x_{0},x) &\coloneqq \frac{\sqrt{\mu_r}}{x_{0}x}\bigg\{\left[\sqrt{\epsilon_r}j_{l}\left(x\right)w_{l}\left(x_{0}\right)-\sqrt{\mu_r}u_{l}\left(x\right) y_{l}\left(x_{0}\right)\right]\bigg. \nonumber
\\
&+\bigg.i\left[\sqrt{\epsilon_r}j_{l}\left(x\right)u_{l}\left(x_{0}\right)-\sqrt{\mu_r}u_{l}\left(x\right)j_{l}\left(x_{0}\right)\right]\bigg\} \nonumber
\\
&\times \bigg\{\left[\sqrt{\epsilon_r}j_{l}\left(x\right)w_{l}\left(x_{0}\right)-\sqrt{\mu_r}u_{l}\left(x\right) y_{l}\left(x_{0}\right)\right]^{2} \bigg. \nonumber
\\
&+\bigg.\left[\sqrt{\epsilon_r}j_{l}\left(x\right)u_{l}\left(x_{0}\right)-\sqrt{\mu_r}u_{l}\left(x\right)j_{l}\left(x_{0}\right)\right]^{2}\bigg\}^{-1},
\label{Mie_Amplitude_j1}
\end{align}
and
\begin{align}
F_{l}^{(2)}(x_{0},x) &\coloneqq \frac{\mu_{r}\sqrt{\epsilon_r}}{x_{0}x} \bigg\{\left[\sqrt{\mu_r}j_{l}\left(x\right)w_{l}\left(x_{0}\right)-\sqrt{\epsilon_r}u_{l}\left(x\right) y_{l}\left(x_{0}\right)\right] \bigg. \nonumber
\\
&+\bigg.i\left[\sqrt{\mu_r}j_{l}\left(x\right)u_{l}\left(x_{0}\right)-\sqrt{\epsilon_r}u_{l}\left(x\right)j_{l}\left(x_{0}\right)\right] \bigg\} \nonumber
\\
&\times \bigg\{\left[\sqrt{\mu_r}j_{l}\left(x\right)w_{l}\left(x_{0}\right)-\sqrt{\epsilon_r}u_{l}\left(x\right) y_{l}\left(x_{0}\right)\right]^{2} \bigg.\nonumber
\\
&+\bigg.\left[\sqrt{\mu_r}j_{l}\left(x\right)u_{l}\left(x_{0}\right)-\sqrt{\epsilon_r}u_{l}\left(x\right)j_{l}\left(x_{0}\right)\right]^{2} \bigg\}^{-1},
\label{Mie_Amplitude_j2}
\end{align}
where the radial functions $u_{l},\,w_{l}$, and $v_{l}$ are defined as 
\begin{align}
&u_{l}(x) \coloneqq \frac{dj_{l}(x)}{d x}+\frac{j_{l}(x)}{x}=\frac{1}{2l+1}[(l+1)j_{l-1}(x)-l j_{l+1}(x)], 
\label{Eq:u_l}
\\
&w_{l}(x) \coloneqq \frac{dy_{l}(x)}{d x}+\frac{y_{l}(x)}{x}=\frac{1}{2l+1}[(l+1) y_{l-1}(x)-l y_{l+1}(x)],
\label{Eq:w_l}
\\
&v_{l}(x) \coloneqq \frac{dh_{l}^{(+)}(x)}{d x}+\frac{h_{l}^{(+)}(x)}{x}=u_{l}(x)+i w_{l}(x)
\label{Eq:v_l}
\end{align}
and where $y_l$, $h_{l}^{(+)}$ are the spherical Neumann and Hankel functions of order $l$, respectively. 

The spherical Bessel and Neumann functions are (\textit{i}) continuous on the closed upper half-plane $\left\{x\in \mathbb {C} \mid \Im(x)\geq 0\right\}$, (\textit{ii}) holomorphic in the upper half-plane $\left\{j_{l}(x),y_{l}(x)\in \mathbb {C} \mid \Im(x)>0\right\}$, and (\textit{iii}) real-valued when their arguments are real. Thus, by the Schwarz reflection principle \cite{Cartan1995}, they satisfy
\begin{equation}
\overline{j_{l}(x)}=j_{l}(\overline{x}) \quad \text{and} \quad \overline{y_{l}(x)}=y_{l}(\overline{x}).
\label{Eq:reflection1}
\end{equation}
Consequently, we also have
\begin{equation}
\overline{u_{l}(x)}=u_{l}(\overline{x}), \quad \overline{w_{l}(x)}=w_{l}(\overline{x}), \quad \text{and} \quad \overline{v_{l}(x)}=v_{l}(\overline{x}).
\label{Eq:reflection2}
\end{equation}
%
\section{Comments of the square-integrability conditions~(\ref{Lower_Bound_Condition-10-1}) and (\ref{Lower_Bound_Condition-10-2})}
\label{Appendix:B}

The square-integrability conditions  
\begin{equation}
\int_{0}^{a} \left| j_{l}(\alpha r) \right|^{2} dr <\infty
\label{Eq:A1}
\end{equation}
and 
\begin{equation}
\int_{0}^{a} \left| r \, u_{l}(\alpha r)\right|^{2} dr < \infty, 
\label{Eq:A2}
\end{equation}
(i.e., conditions~(\ref{Lower_Bound_Condition-10-1}) and (\ref{Lower_Bound_Condition-10-2}), respectively) are needed for the Cauchy-Schwarz inequality~(\ref{Lower_Bound_Condition-7}) to hold. As noted in section~\ref{Sec_Results}, these conditions involve definite integrals of the form
\begin{equation}
\int_0^a r^s \, j_{l}(\alpha r) \, j_{l'}(\alpha' r) \, dr 
\label{Eq:A3}
\end{equation}
where $l,l^{'} \in \mathbb{N}^{*}$, $\alpha,\alpha' \in \mathbb{C}$, $a \in \mathbb{R}$, and $s=0,2$. A useful property of these integrals obtained from the conjugate symmetry of the inner product and the Schwarz reflection principle (\ref{Appendix:A}) is that
\begin{equation}
\overline{\int_0^a r^s \, j_{l}(\alpha r) \, j_{l'}(\alpha' r) \, dr } = \int_0^a r^s \, \overline{j_{l}(\alpha r)} \, \overline{j_{l'}(\alpha' r)} \, dr = \int_0^a r^s \, j_{l}(\overline{\alpha} r) \, j_{l'}(\overline{\alpha'} r) \, dr. 
\end{equation}
These transformations allow us to rewrite the integrals in forms that come in helpful in the calculations.

The integral in (\ref{Eq:A1}) is recovered by setting $l=l'$, $\alpha'=\overline{\alpha}$, and $s=0$ in (\ref{Eq:A3}). 
%
Using (\ref{Eq:u_l}), we obtain the following explicit expression for the integral in (\ref{Eq:A2})
\begin{equation}
\begin{aligned}
\int_{0}^{a} \left| r \, u_{l}(\alpha r)\right|^2 dr &= \frac{l^2}{(2l+1)^2} \int_{0}^{a} r^2 j_{l+1}(\alpha r) j_{l+1}(\overline{\alpha} r) dr
\\
&+\frac{(l+1)^2 }{(2l+1)^2} \int_{0}^{a} r^2 j_{l-1}(\alpha r) j_{l-1}(\overline{\alpha} r) dr
\\
&-\frac{2 l (l+1)}{(2l+1)^2} \Re \left[ \int_{0}^{a} r^2 j_{l-1}(\alpha r) j_{l+1}(\overline{\alpha} r) dr\right].
\end{aligned}
\label{Eq:A5}
\end{equation}
The first two terms are Lommel integrals of the forms \cite{arfken,NIST_Handbook,WolframFunctions,Bowman_Book} 
\begin{subnumcases}
{\int_{0}^{a}\left|r j_{l}(\alpha r)\right|^2 dr =}
$ $ \frac{a^3}{2}\left[j_{l}(\alpha a)^2-j_{l+1}(\alpha a)j_{l-1}(\alpha a)\right] < \infty;$ $ & $\alpha=\overline{\alpha}$ \label{Lower_Bound_Condition-6a}
\\[12pt]
$ $\frac{a^2}{\alpha^2-\overline{\alpha}^2}\left[\overline{\alpha}j_{l}(\alpha a)j_{l-1}(\overline{\alpha} a)-\alpha j_{l}(\overline{\alpha} a)j_{l-1}(\alpha a)\right] < \infty;$ $ & $\alpha\neq\overline{\alpha}$ \label{Lower_Bound_Condition-6b}
\end{subnumcases}
so we need not worry about them. The third term is recovered from (\ref{Eq:A3}) with the substitutions $l \rightarrow l+1$, $l' \rightarrow l-1$, $\alpha'\rightarrow \overline{\alpha}$, and $s\rightarrow 2$. Now, the integrand in (\ref{Eq:A3}) is the product of entire functions, i.e., complex-valued functions that are holomorphic, and thus continuous, on the whole complex plane. Hence, by the fundamental theorem of calculus for complex functions \cite{Cartan1995}, the integral in (\ref{Eq:A3}) is finite, which establishes conditions~(\ref{Eq:A1}) and (\ref{Eq:A2}). 

It is instructive to see when the integral is real-valued. The spherical Bessel function of the first kind with complex argument $z$ and nonnegative, integer order $l$, $j_l(z)$, can be expressed as a power series of the form \cite{NIST_Handbook}
\begin{equation}
j_l(z) = z^l \sum_{n=0}^{\infty} \frac{\left(-1\right)^n z^{2n}}{2^n n! \left(2l+2n+1\right)!!} .
\label{Eq:A6}
\end{equation}
Thus, 
\begin{equation}
r^s j_l(\alpha r) j_{l'}(\alpha' r) = r^s \left(\alpha r \right)^l \sum_{n=0}^{\infty} \frac{\left(-1\right)^{n} \left(\alpha r\right)^{2n}}{2^{n} n ! \left(2l+2n +1\right)!!} \left(\alpha' r \right)^{l'} \sum_{n'=0}^{\infty} \frac{\left(-1\right)^{n'} \left(\alpha' r\right)^{2n'}}{2^{n'} n' ! \left(2l'+2n' +1\right)!!}
\label{Eq:A7-1}
\end{equation}
which is the Cauchy product of two convergent power series (since they represent entire functions). By separating the $n=n'$ and $n\neq n'$ terms, the terms in (\ref{Eq:A7-1}) can be rearranged as follows
\begin{equation}
\begin{aligned}
r^s j_l(\alpha r) j_{l'}(\alpha' r) &= \alpha^l \, \alpha'^{\,l'} r^{l+l'+s} \Bigg\{\sum_{n=0}^{\infty} \left[\frac{\left(\alpha \alpha'\right)^{n}  r^{2n}}{2^{n} n ! \left(2l+2n +1\right)!!}\right]^2\Bigg.
\\
&+ \Bigg.\sum_{n=0}^{\infty} \sum_{\substack{n'=0 \\ n'\neq n}}^{\infty}\frac{\left(-1\right)^{n+n'} \Re\left[\alpha^{2n} \, \alpha'\,^{2n'}\right] r^{2(n+n')}}{2^{n+n'-1} n! n'! \left(2l+2n +1\right)!!\left(2l'+2n' +1\right)!!}\Bigg\}.
\label{Eq:A10}
\end{aligned}
\end{equation}
Note that $l+l'+s \geq 2$, since, by definition $l,l^{'} \in \mathbb{N}^{*}$ and $s=0,2$. Thus
\begin{equation}
r^s j_l(\alpha r) j_{l'}(\alpha' r) \in \mathbb{R} \Rightarrow \sum_{n=0}^{\infty} \left[\frac{  r^{2n}}{2^{n} n ! \left(2l+2n +1\right)!!}\right]^2 \Im \left[ \alpha ^{2n+l}\alpha'^{\,2n+l'}\right]=0.
\label{Eq:A11}
\end{equation}
For the equation in (\ref{Eq:A11}) to hold independently of $r$, we need to have
\begin{equation}
\Im \left[ \alpha ^{2n+l}\alpha'^{\,2n+l'}\right]=0,
\label{Eq:A12}
\end{equation}
that is,
\begin{equation}
2n\left(\arg \alpha+\arg \alpha'\right)+l\arg \alpha+l'\arg \alpha'=0
\label{Eq:A13}
\end{equation}
where $\arg \alpha,\arg \alpha'$ are the principal values of the arguments of $\alpha$ and $\alpha'$, respectively. For (\ref{Eq:A13}) to be independent of $n$, the term in parentheses must vanish. We conclude that
the integral is real-valued if and only if $\arg \alpha=-\arg \alpha'$ and $l=l'$ (two obvious cases are when $\alpha,\alpha' \in \mathbb{R}$ or $\alpha'=\overline{\alpha}$).

\newpage

\bibliographystyle{iopart-num}

\bibliography{MTM}

\end{document}